\begin{document}

\title{Constructions of Optimal and Almost Optimal Locally Repairable Codes}

\author{
\authorblockN{Toni Ernvall}
\authorblockA{Turku Centre for Computer Science, Turku, Finland\\ \&  Department of Mathematics and Statistics\\ FI-20014 University of Turku, Finland\\ (e-mail: tmernv@utu.fi)}\and
\authorblockN{Thomas Westerb{\"a}ck and Camilla Hollanti }
\authorblockA{Department of Mathematics and Systems Analysis\\ Aalto University, P.O. Box 11100\\ FI-00076 Aalto, Finland \\ (e-mails: \{firstname.lastname@aalto.fi\})}
}

\maketitle

\newtheorem{definition}{Definition}[section]
\newtheorem{thm}{Theorem}[section]
\newtheorem{proposition}[thm]{Proposition}
\newtheorem{lemma}[thm]{Lemma}
\newtheorem{corollary}[thm]{Corollary}
\newtheorem{exam}{Example}[section]
\newtheorem{conj}{Conjecture}
\newtheorem{remark}{Remark}[section]

\newcommand{\La}{\mathbf{L}}
\newcommand{\h}{{\mathbf h}}
\newcommand{\Z}{{\mathbf Z}}
\newcommand{\R}{{\mathbf R}}
\newcommand{\C}{{\mathbf C}}
\newcommand{\D}{{\mathcal D}}
\newcommand{\F}{{\mathbf F}}
\newcommand{\HH}{{\mathbf H}}
\newcommand{\OO}{{\mathcal O}}
\newcommand{\G}{{\mathcal G}}
\newcommand{\A}{{\mathcal A}}
\newcommand{\B}{{\mathcal B}}
\newcommand{\I}{{\mathcal I}}
\newcommand{\E}{{\mathcal E}}
\newcommand{\PP}{{\mathcal P}}
\newcommand{\Q}{{\mathbf Q}}
\newcommand{\M}{{\mathcal M}}
\newcommand{\separ}{\,\vert\,}
\newcommand{\abs}[1]{\vert #1 \vert}

\begin{abstract}
Constructions of optimal locally repairable codes (LRCs) in the case of $(r+1) \nmid n$ and over small finite fields were stated as open problems for LRCs in  [I. Tamo \emph{et al.}, ``Optimal locally repairable codes and connections to matroid theory'', \emph{2013 IEEE ISIT}].  In this paper, these problems are studied by constructing almost optimal linear LRCs, which are proven to be optimal for certain parameters, including cases for which $(r+1) \nmid n$. More precisely, linear codes for given length, dimension, and all-symbol locality are constructed with almost optimal minimum distance. `Almost optimal' refers to the fact that their minimum distance differs by at most one from the optimal value given by a known bound for LRCs. In addition to these linear LRCs, optimal LRCs which do not require a large field are constructed for certain classes of parameters.
\end{abstract}

\section{Introduction}
\subsection{Locally Repairable Codes}
In the literature, three kinds of repair cost metrics are studied: \emph{repair bandwidth} \cite{dimakis}, \emph{disk-I/O} \cite{diskIO}, and \emph{repair locality} \cite{Gopalan,Oggier,Simple}. In this paper the repair locality is the subject of interest.

 Given a finite field $\mathbb{F}_q$ with $q$ elements and an injective function $f: \mathbb{F}_q^k \rightarrow \mathbb{F}_q^n$, let $C$ denote the image of $f$. We say that $C$ is a \emph{locally repairable code (LRC)} and has  \emph{all-symbol locality} with parameters $(n,k,r,d)$, if the code $C$ has minimum (Hamming) distance $d$ and all the $n$ symbols of the code have repair locality $r$. The $j$th symbol has repair locality $s$ if there exists a set
$$
\{i_1,\ldots,i_s\} \subseteq \{1,\ldots,n\} \setminus \{j\}
$$
and a function $f_j$ such that
$$
f_j((y_{i_1}, \ldots, y_{i_s})) = y_j \hbox{ for all } \boldsymbol{y} \in C.
$$
LRCs are defined when $1 \leq r \leq k$. By a linear LRC we mean a linear code of length $n$ and dimension $k$.

In \cite{LRCpapailiopoulos}, Papailiopoulos \emph{et al.} establish an information theoretic bound for both linear and nonlinear codes. With $\epsilon = 0$ in \cite[Thm. 1]{LRCpapailiopoulos} we have the following bound for a locally repairable code $C$ with parameters $(n,k,r,d)$:
\begin{equation}
\label{upperbound}
d \leq n - k - \left \lceil \frac{k}{r} \right \rceil + 2
\end{equation}
A locally repairable code that meets this bound is called \emph{optimal}.




\subsection{Related Work}
As mentioned above, in the all-symbol locality case the information theoretic trade-off between locality and code distance for any (linear or nonlinear) code was derived in \cite{LRCpapailiopoulos}. Furthermore, constructions of optimal LRCs for the case when $(r+1) \nmid n$ and over small finite fields when $k$ is large were stated as open problems for LRCs in \cite{LRCmatroid}. In \cite{LRCmatroid} it was proved that there exists an optimal LRC for parameters $(n,k,r)$ over a field $\mathbb{F}_q$ if $r+1$ divides $n$ and $q=p^{k+1}$ with $p$  large enough. In \cite{SongOptimal} and \cite{TamoBarg} the existence of optimal LRCs was proved for several parameters $(n,k,r)$. Good codes with the weaker assumption of information symbol locality are designed in \cite{Pyramid}. In \cite{Gopalan} it was shown that there exist parameters $(n,k,r)$ for linear LRCs for which the bound of Eq. \eqref{upperbound} is not achievable.

\subsection{Contributions and Organization}
In this paper, we try to build good codes with all-symbol locality, when given parameters $n$, $k$, and $r$. As a measure for the goodness of a code we use its minimum distance $d$. Also, we prefer codes with simple structure, and the property that the construction does not require large field size. Moreover, we give some constructions of optimal LRCs, including cases for which $(r+1) \nmid n$, as well as constructions over small fields. Although codes in the case $(r+1) \nmid n$ are already constructed in \cite{SongOptimal} and \cite{TamoBarg}, the benefits of our construction are that it uses only some elementary linear algebra and it is very simple.

Section \ref{Sec:minDistance} studies the largest achievable minimum distance of the linear locally repairable codes. We show that with a field size large enough we have linear codes with minimum distance at least $d_{\text{opt}}(n,k,r)-1$ for every feasible triplet of parameters $(n,k,r)$. In  Subsection \ref{Subsec:Construction}, we give a construction of such an almost optimal linear locally repairable code. In Subsection \ref{Subsec:Analysis}, we analyze the minimum distance of our construction and derive a lower bound for the largest achievable minimum distance of the linear locally repairable code. Moreover, we prove that our construction results in optimal LRCs (including cases of $(r+1) \nmid n$) for specific parameter values.

In Section \ref{Subsec:Construction-Optimal} we give some constructions of optimal LRCs for certain classes of parameters which do not require a large field. Namely, for certain values of $(r,d)$, we give constructions of optimal $(n,k,r,d)$-LRCs for which the size of the field does not depend on the size of $k$ and $n$.

\section{Constructing Almost Optimal Codes}\label{Sec:minDistance}

\subsection{Construction}\label{Subsec:Construction}
In this subsection we will give a construction for linear locally repairable codes with all-symbol locality over a field $\mathbb{F}_q$ with $q>2\binom{n}{k-1}$, given parameters $(n,k,r)$ such that $n-\left\lceil\frac{n}{r+1}\right\rceil \geq k$. We also assume that $k<n$ and $n \not\equiv 1 \mod r+1$. Write $n=a(r+1)+b$, where $0 \leq b < r+1$.
We will construct a generator matrix for a linear code under the above assumptions. The minimum distance of the constructed code will be studied in Subsection \ref{Subsec:Analysis}.

Next we will build $A=\left\lceil\frac{n}{r+1}\right\rceil$ sets $S_1, S_2, \dots, S_A$ such that each of them consists of $r+1$ vectors of $\mathbb{F}_q^k$, except for $S_A$ that shall consist of $n-(A-1)(r+1)$ vectors of $\mathbb{F}_q^k$.

First, choose any $r$ linearly independent vectors $\mathbf{g}_{1,1},\dots,\mathbf{g}_{1,r}$. Let $\mathbf{s}_{1,r+1}$ be $\sum_{l=1}^{r}\mathbf{g}_{1,l}$. These $r+1$ vectors form the set $S_1$. This set has the property that any $r$ vectors from this set are linearly independent.

Let $1< i \leq A$. Assume that we have $i-1$ sets $S_1, S_2, \dots, S_{i-1}$ such that when taken at most $k$ vectors from these sets, at most $r$ vectors from each set, these vectors are linearly independent. Next we will show inductively that this is possible by constructing the set $S_i$ with the same property.

Let $\mathbf{g}_{i,1}$ be any vector such that when taken at most $k-1$ vectors from the already built sets, with at most $r$ vectors from each set, then $\mathbf{g}_{i,1}$ and these $k-1$ other vectors are linearly independent. This is possible since $\binom{n}{k-1}q^{k-1}<q^k$.
Write $\mathbf{s}_{i,j} = \sum_{l=1}^{j} \mathbf{g}_{i,l}$ for $j=1,\dots,r$.

Suppose we have $j$ vectors $\mathbf{g}_{i,1},\dots,\mathbf{g}_{i,j}$ such that when taken at most $k$ vectors from the sets $S_1, S_2, \dots, S_{i-1}$ or $\{\mathbf{g}_{i,1},\dots,\mathbf{g}_{i,j},\mathbf{s}_{i,j}\}$, with at most $r$ vectors from each set $S_1, S_2, \dots, S_{i-1}$ and at most $j$ vectors from the set $\{\mathbf{g}_{i,1},\dots,\mathbf{g}_{i,j},\mathbf{s}_{i,j}\}$, then these vectors are linearly independent.

Choose $\mathbf{g}_{i,j+1}$ to be any vector with the following two properties: When taken at most $k-1$ vectors from the sets $S_1, S_2, \dots, S_{i-1}$ or $\{\mathbf{g}_{i,1},\dots,\mathbf{g}_{i,j},\mathbf{s}_{i,j}\}$, with at most $r$ vectors from each set $S_1, S_2, \dots, S_{i-1}$ and at most $j$ vectors from the set $\{\mathbf{g}_{i,1},\dots,\mathbf{g}_{i,j},\mathbf{s}_{i,j}\}$, then $\mathbf{g}_{i,j+1}$ and these $k-1$ other vectors are linearly independent. Require also the following property: when taken at most $k-1$ vectors from the sets $S_1, S_2, \dots, S_{i-1}$ or $\{\mathbf{g}_{i,1},\dots,\mathbf{g}_{i,j},\mathbf{s}_{i,j}\}$, with at most $r$ vectors from each set $S_1, S_2, \dots, S_{i-1}$ and at most $j$ vectors from the set $\{\mathbf{g}_{i,1},\dots,\mathbf{g}_{i,j},\mathbf{s}_{i,j}\}$, then $\mathbf{s}_{i,j+1}$ and these $k-1$ other vectors are linearly independent. This is possible because there are at most $\binom{n}{k-1}$ different possibilities to choose, each of the options span a subspace with $q^{k-1}$ vectors, and since $q$ is large we have $2\binom{n}{k-1}q^{k-1}<q^k$. Notice that $\mathbf{s}_{i,j+1} \in V$ (where $V$ is some subspace) if and only if $\mathbf{g}_{i,j+1} \in \mathbf{-s}_{i,j}+V$.

To prove the induction step we have to prove the following thing: when taken at most $k-1$ vectors from sets $S_1, S_2, \dots, S_{i-1}$ or $\{\mathbf{g}_{i,1},\dots,\mathbf{g}_{i,j+1}\}$, with at most $r$ vectors from each set $S_1, S_2, \dots, S_{i-1}$ and at most $j$ vectors from the set $\{\mathbf{g}_{i,1},\dots,\mathbf{g}_{i,j+1}\}$, then $\mathbf{s}_{i,j+1}$ and these $k-1$ other vectors are linearly independent. Let $h \leq j$, $\mathbf{v}$ be a sum of at most $k-1-h$ vectors from the sets $S_1, S_2, \dots, S_{i-1}$ with at most $r$ vectors from each set, and let $m_1<\dots<m_h$ be indices in ascending order. We will assume a contrary: We have coefficients $c_{m_1},\dots,c_{m_h} \in \mathbf{F}_q$ such that
\begin{equation}
\mathbf{s}_{i,j+1} = \mathbf{v} + \sum_{l=1}^{h}c_{m_l}\mathbf{g}_{i,m_l}.
\end{equation}
If $m_h \neq j+1$ then our assumption is false by the definition so assume that $m_h = j+1$.
If $c_{j+1} \neq 1$ then
\begin{equation}
(1-c_{j+1})\mathbf{g}_{i,j+1} = \mathbf{v} + \sum_{l=1}^{h-1}c_{m_l}\mathbf{g}_{i,m_l} - \mathbf{s}_{i,j}.
\end{equation}
and again our assumption is false by the definition.
So assume that $c_{j+1} = 1$. Then we get
\begin{equation}
\mathbf{s}_{i,j} = \mathbf{v} + \sum_{l=1}^{h-1}c_{m_l}\mathbf{g}_{i,m_l}.
\end{equation}
and since $h-1 \leq j-1$ the assumption is false by the induction step.

Now, the sets $S_i$ consist of vectors $\{\mathbf{g}_{i,1},\dots,\mathbf{g}_{i,r},\mathbf{s}_{i,r}\}$ for $i=1,\dots,a$. If $b \neq 0$ the set $S_A$ consists of vectors $\{\mathbf{g}_{A,1},\dots,\mathbf{g}_{A,b-1},\mathbf{s}_{A,b-1}\}$. The matrix $\mathbf{G}$ is a matrix with vectors from the sets $S_1,S_2,\dots,S_A$ as its column vectors, \emph{i.e.},
\[
\mathbf{G}=\left(\mathbf{G}_1|\mathbf{G}_2|\dots|\mathbf{G}_A\right)
\]
where
\[
\mathbf{G_j}=\left(\mathbf{g}_{j,1}|\dots|\mathbf{g}_{j,r}|\mathbf{s}_{j,r}\right)
\]
for $i=1,\dots,a$, and
\[
\mathbf{G_A}=\left(\mathbf{g}_{A,1}|\dots|\mathbf{g}_{A,b-1}|\mathbf{s}_{A,b-1}\right)
\]
if $b\neq 0$.

To be a generator matrix for a code of dimension $k$ the rank of $\mathbf{G}$ has to be $k$. By the construction the rank is $k$ if and only if
$n-A \geq k$, and this is what we assumed.

\subsection{Lower Bound for the Largest Achievable Minimum Distance}\label{Subsec:Analysis}
In this subsection we will derive a lower bound for the largest achievable minimum distance of the linear codes with all-symbol locality. We will do this by analyzing the construction of Subsection \ref{Subsec:Construction}.

 Let $C$ be a linear code with a generator matrix $G$. A subset $A$ of the columns of $G$ is called a \emph{circuit} if $A$ is linearly dependent and all proper subsets of $A$ are linearly independent. A collection of circuits $C_1, \ldots, C_l$ of $C$ is called a \emph{nontrivial union} if
$$
C_i \nsubseteq \bigcup_{j \neq i} C_j, \hbox{ for  } 1 \leq i \leq l.
$$
To analyze our code construction we will use the following result that was proved by Tamo \emph{et al.} in \cite{LRCmatroid}.
\begin{thm}\label{Thm:mindistanceTamo}
The minimum distance of the linear locally repairable code is equal to
\[
d=n-k-\mu+2
\]
where $\mu$ is the minimum positive integer such that the size of every nontrivial union of $\mu$ circuits is at least $\mu+k$.
\end{thm}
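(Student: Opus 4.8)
The plan is to forget about locality entirely — the asserted formula holds for \emph{any} linear $[n,k]$ code — and to work with the matroid $M$ on the ground set $\{1,\dots,n\}$ represented by the columns of a generator matrix $G$. Write $\operatorname{rk}$ for its rank function and $\eta(X)=|X|-\operatorname{rk}(X)$ for the nullity of a set of columns $X$. The first step is the classical translation of minimum distance into column language: a nonzero codeword of weight $w$ has support $S$ with $|S|=w$, and the nonzero message vector producing it is orthogonal to every column indexed by $T=\{1,\dots,n\}\setminus S$, forcing $\operatorname{rk}(T)\le k-1$; conversely, any $T$ with $\operatorname{rk}(T)\le k-1$ gives a nonzero vector orthogonal to those columns, hence a nonzero codeword supported inside the complement of $T$. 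This yields
\[
n-d=\max\{\,|T|: \operatorname{rk}(T)\le k-1\,\}.
\]

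Next I would pass from sizes to nullities. Set $t^{\ast}=\max\{\eta(T):\operatorname{rk}(T)\le k-1\}$. On one hand $|T|=\operatorname{rk}(T)+\eta(T)\le (k-1)+t^{\ast}$ for every admissible $T$; on the other, starting from a set attaining $t^{\ast}$ and repeatedly adjoining a column that raises the rank by one (possible as long as the rank is below $k$) until the rank equals $k-1$ — an operation that leaves the nullity unchanged — produces an admissible $T$ of size exactly $(k-1)+t^{\ast}$. Hence $n-d=(k-1)+t^{\ast}$, i.e.\ $d=n-k+1-t^{\ast}$, and the theorem reduces to the claim $\mu=t^{\ast}+1$.

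To get $\mu\le t^{\ast}+1$, the key lemma is that a nontrivial union $X=C_1\cup\dots\cup C_l$ of $l$ circuits has $\eta(X)\ge l$. I would prove this by induction on $l$ using supermodularity of $\eta$ (equivalently submodularity of $\operatorname{rk}$): with $X'=C_1\cup\dots\cup C_{l-1}$ (itself a nontrivial union) and a choice of $x_l\in C_l\setminus\bigcup_{j<l}C_j$, the intersection $X'\cap C_l$ is a proper subset of the circuit $C_l$, hence independent, so $\eta(X)\ge\eta(X')+\eta(C_l)-\eta(X'\cap C_l)\ge (l-1)+1-0$. Applying this with $l=t^{\ast}+1$: any nontrivial union of $t^{\ast}+1$ circuits has nullity $\ge t^{\ast}+1>t^{\ast}$, so by the definition of $t^{\ast}$ its rank is at least $k$, and therefore its size is at least $k+(t^{\ast}+1)$.

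For $\mu\ge t^{\ast}+1$ I would show the defining property fails for every $l\le t^{\ast}$ by producing a nontrivial union of $l$ circuits of size at most $(k-1)+l<l+k$. Fix a set $T^{\ast}$ with $\operatorname{rk}(T^{\ast})\le k-1$ and $\eta(T^{\ast})=t^{\ast}$, choose a basis $B\subseteq T^{\ast}$ (so $|B|\le k-1$), let $x_1,\dots,x_{t^{\ast}}$ be the remaining elements of $T^{\ast}$, and let $C_i$ be the fundamental circuit of $x_i$ with respect to $B$; since $C_j\subseteq B\cup\{x_j\}$, each $x_i$ lies in $C_i$ and in no other $C_j$, so for any $l\le t^{\ast}$ the circuits $C_1,\dots,C_l$ are distinct and form a nontrivial union contained in $B\cup\{x_1,\dots,x_l\}$. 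Combining the two bounds gives $\mu=t^{\ast}+1$ and hence $d=n-k-\mu+2$. The step I expect to be most delicate is the nullity lemma for nontrivial unions of circuits, together with the care needed to pin down the \emph{minimum} such $\mu$ — one must check that the property genuinely fails for all $l<t^{\ast}+1$, which is exactly why the fundamental-circuit construction is carried out for every $l$, not just for $l=t^{\ast}$; the rest is routine linear and matroid algebra.
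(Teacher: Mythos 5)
The paper does not prove this theorem; it is quoted verbatim from Tamo \emph{et al.}\ \cite{LRCmatroid} and used as a black box, so there is no in-paper proof to compare against. Your reconstruction is, as far as I can tell, correct and complete, and it follows the natural matroid-theoretic route one would expect the original authors to take: (i) reduce $d$ to $n-d=\max\{|T|:\operatorname{rk}(T)\le k-1\}$ via the orthogonality of a message vector to the columns outside the support of the corresponding codeword; (ii) rewrite this as $n-d=(k-1)+t^{\ast}$ with $t^{\ast}=\max\{\eta(T):\operatorname{rk}(T)\le k-1\}$, the only small point being the rank-growth argument to realize the upper bound, which you handle; (iii) prove the nullity lemma $\eta(C_1\cup\cdots\cup C_l)\ge l$ for a nontrivial union of circuits by supermodularity of $\eta$ plus the observation that $X'\cap C_l$ is a proper, hence independent, subset of the circuit $C_l$; (iv) exhibit, for every $l\le t^{\ast}$, a nontrivial union of $l$ fundamental circuits of total size at most $(k-1)+l<k+l$ to force $\mu>t^{\ast}$. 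Steps (iii) and (iv) together give $\mu=t^{\ast}+1$, and then $d=n-k+1-t^{\ast}=n-k-\mu+2$. One stylistic remark: you correctly observe that the result has nothing to do with locality and holds for any linear $[n,k]$ code with full-rank generator matrix; this is worth saying explicitly since the theorem's phrasing (``the linear locally repairable code'') obscures it. I see no gaps: the monotonicity concern you flag (pinning down the \emph{minimum} $\mu$) is exactly what your fundamental-circuit construction for all $l\le t^{\ast}$ resolves, and the degenerate cases (no circuits, zero columns, $t^{\ast}=0$) all check out against the formula.
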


To make the notations clearer we define $D_q(n,k,r)$ to be the minimum distance of our code construction for given parameters. To be exact, we have the following definition.
\begin{definition}
If our construction covers parameters $(n,k,r)$ over $\mathbb{F}_q$, then define $D_q(n,k,r)$ to be the minimum distance of such a code. If our construction does not cover parameters $(n,k,r)$ over $\mathbb{F}_q$, then define $D_q(n,k,r)$ to be zero.
\end{definition}

For the largest achievable minimum distance under the assumption of information symbol locality, we mark to be
\[
d_{\text{opt}}(n,k,r) := \max\left\{n-k-\left\lceil\frac{k}{r}\right\rceil+2 , 0 \right\}.
\]
The reason for this kind of definition is that if $n-k-\left\lceil\frac{k}{r}\right\rceil+2 \leq 0$ then it is impossible to have a code for parameters $(n,k,r)$.

Since the assumption of all-symbol locality is stronger than the assumption of information symbol locality, we know that
\begin{equation}\label{upperboundWithq}
D_q(n,k,r) \leq d_{\text{opt}}(n,k,r).
\end{equation}

In \cite{Gopalan} it was proved that there exists triplets $(n,k,r)$ such that the inequality \ref{upperboundWithq} is strict. So the natural question arises:
What is the relationship between $d_{\text{opt}}(n,k,r)$ and $D_q(n,k,r)$? Next we will study this question.


First we need a small straightforward lemma.
\begin{lemma}\label{lemma:kr}
Suppose $n-\left\lceil\frac{n}{r+1}\right\rceil \geq k$. Then $\frac{k}{r} \leq \frac{n}{r+1}$.
\end{lemma}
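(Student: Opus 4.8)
### Proof Proposal

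The plan is to prove the inequality $\frac{k}{r} \leq \frac{n}{r+1}$ directly from the hypothesis $n - \lceil \frac{n}{r+1} \rceil \geq k$, using only elementary manipulations of ceilings and the assumption $1 \leq r$ (which is part of the standing definition of LRCs). The key observation is that $\lceil \frac{n}{r+1} \rceil \geq \frac{n}{r+1}$, so the hypothesis yields $k \leq n - \frac{n}{r+1} = \frac{rn}{r+1}$. Dividing both sides by $r$ (which is positive) gives exactly $\frac{k}{r} \leq \frac{n}{r+1}$, as desired.

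Concretely, the steps I would carry out are: first, record the elementary bound $\lceil x \rceil \geq x$ for any real $x$, applied with $x = \frac{n}{r+1}$. Second, combine this with the hypothesis to get the chain $k \leq n - \lceil \frac{n}{r+1} \rceil \leq n - \frac{n}{r+1}$. Third, simplify $n - \frac{n}{r+1} = \frac{n(r+1) - n}{r+1} = \frac{nr}{r+1}$. Fourth, divide through by $r > 0$ to conclude $\frac{k}{r} \leq \frac{n}{r+1}$. That completes the argument.

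There is essentially no obstacle here — as the authors themselves flag, it is a "small straightforward lemma." The only point requiring the slightest care is that dividing by $r$ preserves the inequality, which is valid since $r \geq 1 > 0$; one should also note $r+1 \geq 2 > 0$ so all the fractions are well-defined and the rearrangement $n - \frac{n}{r+1} = \frac{nr}{r+1}$ is legitimate. No appeal to Theorem~\ref{Thm:mindistanceTamo} or to the construction is needed; this is purely arithmetic and will be used later to compare $d_{\text{opt}}(n,k,r)$ with the parameters entering the construction.
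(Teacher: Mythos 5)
Your proof is correct, and since the paper omits a proof entirely (calling it a ``small straightforward lemma''), your argument---bounding $\lceil n/(r+1)\rceil \geq n/(r+1)$, simplifying $n - n/(r+1) = nr/(r+1)$, and dividing by $r>0$---is exactly the intended elementary calculation. Nothing is missing.
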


\begin{proposition}\label{Thm:construction}
Suppose $q>2\binom{n}{k-1}$, $k<n$, $n-\left\lceil\frac{n}{r+1}\right\rceil \geq k$, and $n \not\equiv 1 \mod r+1$. Then
\[
D_q(n,k,r)=d_{\text{opt}}(n,k,r)
\]
if $r+1$ divides $n$, and
\[
D_q(n,k,r) \geq n-k-\left\lfloor\frac{k}{r}-\frac{n}{r+1}\right\rfloor-\left\lfloor\frac{n}{r+1}\right\rfloor
\]
otherwise.
\end{proposition}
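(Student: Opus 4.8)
The plan is to apply Theorem~\ref{Thm:mindistanceTamo} to the generator matrix $\mathbf{G}$ built in Subsection~\ref{Subsec:Construction}: compute the quantity $\mu$ (the least integer for which every nontrivial union of $\mu$ circuits has size at least $\mu+k$), and then plug into $d = n-k-\mu+2$. The first task is to understand the circuits of $\mathbf{G}$. By construction, within each block $\mathbf{G}_j = (\mathbf{g}_{j,1}|\dots|\mathbf{g}_{j,r}|\mathbf{s}_{j,r})$ the $r+1$ columns satisfy the single relation $\mathbf{s}_{j,r} = \sum_{l=1}^r \mathbf{g}_{j,l}$, and any $r$ of them are linearly independent; so each block $S_j$ is itself a circuit of size $r+1$ (size $b$ for the last block $S_A$ when $b\neq 0$). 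The inductive independence property that was established in the construction — any collection of at most $k$ columns, taking at most $r$ from each block, is linearly independent — tells us that these block-circuits are essentially the \emph{only} small circuits available: any circuit must either be one of the $S_j$'s, or it must be large, namely it must contain at least $k+1$ columns (since a linearly dependent set using $\le r$ columns per block and $\le k$ columns total cannot exist, a dependent set that is not inside a single block has at least $k+1$ elements). This dichotomy is what makes the whole estimate tractable.

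Next I would compute, for each $\mu$, the minimum possible size of a nontrivial union of $\mu$ circuits. Given the dichotomy above, the economical way to form a nontrivial union of $\mu$ circuits is to use $\mu$ of the block-circuits $S_j$ (using a big circuit only helps the adversary very little, and in any case a union containing a big circuit already has size $\ge k+1 \ge \mu+k$ once we account for the extra circuits). Taking $\mu$ full blocks of size $r+1$ gives a union of size $\mu(r+1)$, provided $\mu \le a$; if we must include the short last block $S_A$ of size $b$, the union of $\mu$ blocks ($\mu \le A$) has minimum size $(\mu-1)(r+1)+b$. So the condition ``every nontrivial union of $\mu$ circuits has size $\ge \mu+k$'' becomes, for the relevant range of $\mu$, the inequality $(\mu-1)(r+1)+b \ge \mu+k$ when the short block is forced in, i.e. $\mu \ge \mu_0$ where $\mu_0$ is obtained by solving this; and one also checks that for smaller $\mu$ one can indeed find a nontrivial union of size exactly $\mu r + (\text{something})$ violating the bound. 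Here Lemma~\ref{lemma:kr} enters: it guarantees $\frac{k}{r}\le\frac{n}{r+1}$, which is exactly what is needed to ensure that $\mu_0 \le A = \lceil n/(r+1)\rceil$, so that there really are enough blocks to realize the extremal union and the formula for $\mu$ is the one coming from the short-block count rather than some boundary artifact.

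Carrying this out, when $r+1 \mid n$ there is no short block ($b=0$), every block has size $r+1$, and the extremal union of $\mu$ circuits has size $\mu(r+1)$; the inequality $\mu(r+1)\ge \mu+k$ gives $\mu \ge \lceil k/r\rceil$, hence $\mu = \lceil k/r\rceil$ and $d = n-k-\lceil k/r\rceil+2 = d_{\text{opt}}(n,k,r)$, matching the bound~\eqref{upperbound}; combined with~\eqref{upperboundWithq} this forces equality. When $r+1\nmid n$, the short block $S_A$ of size $b$ must eventually be used in any sufficiently large union (there are only $a$ full blocks), so the extremal size is $(\mu-1)(r+1)+b$ for $\mu$ near the threshold; solving $(\mu-1)(r+1)+b\ge \mu+k$ and simplifying with $n=a(r+1)+b$ yields the stated value $\mu = \lceil k/r\rceil + \big(\lfloor k/r - n/(r+1)\rfloor + \lfloor n/(r+1)\rfloor - \lceil k/r\rceil + 1\big)$ or, more directly, $\mu \le \lfloor k/r - n/(r+1)\rfloor + \lfloor n/(r+1)\rfloor + 1$, which upon substitution into $d=n-k-\mu+2$ gives precisely $D_q(n,k,r)\ge n-k-\lfloor k/r - n/(r+1)\rfloor-\lfloor n/(r+1)\rfloor$.

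I expect the main obstacle to be the careful bookkeeping in the second step: pinning down exactly which nontrivial unions of $\mu$ circuits have the smallest size, and in particular verifying rigorously that unions involving a ``large'' circuit (size $\ge k+1$) can never beat unions of block-circuits, and that the short block $S_A$ is indeed forced into the union once $\mu$ exceeds $a$. One must also be attentive to the nontriviality condition $C_i \nsubseteq \bigcup_{j\neq i} C_j$: distinct blocks $S_j$ are disjoint, so a family of distinct block-circuits is automatically a nontrivial union, which is reassuring, but the argument that no cleverer configuration does better requires the full strength of the ``at most $r$ per block, at most $k$ total $\Rightarrow$ independent'' property and a short case analysis. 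The floor/ceiling manipulations at the end are routine once the integer $\mu$ is identified.
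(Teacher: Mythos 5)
Your proposal is correct and follows essentially the same route as the paper: apply Theorem~\ref{Thm:mindistanceTamo}, classify circuits as block-circuits $S_j$ versus large circuits of size at least $k+1$, observe that a nontrivial union containing a large circuit automatically satisfies the size bound, compute the minimum size of a union of $\mu$ block-circuits in the two cases $b=0$ and $b\neq 0$, and invoke Lemma~\ref{lemma:kr} to discard the boundary term $A+1$. One minor imprecision worth noting: when $b\neq 0$, the short block $S_A$ is not merely ``eventually forced in''; it is always included in the \emph{minimum-size} union of $\mu$ block-circuits for every $\mu\le A$, simply because $|S_A|=b<r+1$, which is the cleaner justification the paper uses.
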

\begin{proof}
The construction of Subsection \ref{Subsec:Construction} gives a generating matrix $\mathbf{G}$ for a linear code. It is clear that the code it generates has the all-symbol repair locality $r$.

By Theorem \ref{Thm:mindistanceTamo} its minimum distance is $n-k-\mu+2$ where $\mu$ is a minimum positive integer $m$ with the following property: the size of every nontrivial union of $m$ circuits is at least $k+m$.

We remark that there are circuits of at most two types: possibly of size $k+1$ and those corresponding the sets $S_j$. Suppose we have a nontrivial union of $m$ circuits containing a circuit of size $k+1$. Then the size of this union is at least $(k+1)+(m-1)=k+m$.

Consider now only circuits corresponding the sets $S_j$. We have $A=\left\lceil\frac{n}{r+1}\right\rceil$ such circuits. It is easy to see that every union of such circuits is nontrivial. Write as before $n=a(r+1)+b$ with $0 \leq b < r+1$.

Suppose first that $b=0$. Then $|S_j|=r+1$ for all $j$. Each union of $m$ circuits has the same size
\[
|\cup_{j=1}^{m} S_{i_j}|=m(r+1)
\]
and $m(r+1) \geq m+k$ if and only if $m \geq \frac{k}{r}$, and hence $\mu=\min \left\{\left\lceil\frac{k}{r}\right\rceil ,A+1 \right\}= \left\lceil\frac{k}{r}\right\rceil$ by lemma \ref{lemma:kr}.

This gives that $D_q(n,k,r) =n-k-\left\lceil\frac{k}{r}\right\rceil+2$ when $r+1$ divides $n$.

Suppose now that $b\neq0$. Then $|S_j|=r+1$ for all $j$ except that $|S_A|=b$. Each minimal union of $m$ circuits contains the circuit corresponding the set $S_A$ and hence has the size
\[
|\cup_{j=1}^{m-1} S_{i_j} \cup S_A|=(m-1)(r+1)+b=mr-r+m-1+b.
\]
We have $mr-r+m-1+b \geq m+k$ if and only if
\[
m \geq \frac{k+1+r-b}{r} =1+\frac{k+1-n+\left\lfloor\frac{n}{r+1}\right\rfloor}{r}+\left\lfloor\frac{n}{r+1}\right\rfloor.
\]
Notice also that
\begin{equation}
\begin{split}
& \left\lceil 1+\frac{k+1-n+\left\lfloor\frac{n}{r+1}\right\rfloor}{r}+\left\lfloor\frac{n}{r+1}\right\rfloor \right\rceil \\
= & \left\lfloor\frac{k}{r}-\frac{n}{r+1}\right\rfloor+\left\lfloor\frac{n}{r+1}\right\rfloor+2 \\
\end{split}
\end{equation}
and hence
\begin{equation}
\begin{split}
\mu & =\min\left\{ \left\lfloor\frac{k}{r}-\frac{n}{r+1}\right\rfloor+\left\lfloor\frac{n}{r+1}\right\rfloor+2 , A+1 \right\} \\
& =\left\lfloor\frac{n}{r+1}\right\rfloor+2+\left\lfloor\frac{k}{r}-\frac{n}{r+1}\right\rfloor \\
\end{split}
\end{equation}
by lemma \ref{lemma:kr}.

This gives that
\[
D_q(n,k,r) \geq n-k-\mu+2 = n-k-\left\lfloor\frac{k}{r}-\frac{n}{r+1}\right\rfloor-\left\lfloor\frac{n}{r+1}\right\rfloor
\]
when $n \not\equiv 0,1 \mod r+1$.

\end{proof}

As a consequence the above analysis of the construction we have the following theorem.
\begin{thm}\label{Thm:MinDistanceSet}
Suppose $q>2\binom{n}{k-1}$ and $k<n$. Then $D_q(n,k,r) \in \{d_{\text{opt}}(n,k,r)-1,d_{\text{opt}}(n,k,r)\}$.
\end{thm}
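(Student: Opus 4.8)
The plan is to reduce Theorem~\ref{Thm:MinDistanceSet} to Proposition~\ref{Thm:construction} together with the trivial upper bound \eqref{upperboundWithq}, by checking that every feasible triplet $(n,k,r)$ falls into a case where the proposition applies, and then showing the lower bound the proposition produces is never more than one below $d_{\text{opt}}(n,k,r)$.

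First I would dispose of the degenerate and boundary cases. If $n - \lceil n/(r+1)\rceil < k$, or $n \equiv 1 \bmod r+1$, the construction does not cover the parameters and $D_q(n,k,r)=0$ by definition; here one must check that $d_{\text{opt}}(n,k,r) \in \{0,1\}$, i.e. that $n-k-\lceil k/r\rceil+2 \le 1$, so that $\{d_{\text{opt}}-1,d_{\text{opt}}\}$ contains $0$. For $n\equiv 1 \bmod r+1$ this is essentially the statement that $(r+1)\mid n$ is (almost) forced by the locality bound when $k$ is large; for the other failure mode it follows from Lemma~\ref{lemma:kr}'s hypothesis being violated. Also the case $r\ge k$ (where $\lceil k/r\rceil=1$ and the code is ``globally'' repairable) and small cases should be handled separately or folded in. This bookkeeping is where I expect to spend the most effort and where the main obstacle lies: making sure no feasible triplet slips through the cracks, and in particular confirming that when the construction's hypotheses fail, $d_{\text{opt}}$ is already $\le 1$.

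Next, assume the hypotheses of Proposition~\ref{Thm:construction} hold. If $(r+1)\mid n$, the proposition gives $D_q(n,k,r)=d_{\text{opt}}(n,k,r)$ and there is nothing more to prove. If $n\not\equiv 0,1 \bmod r+1$, the proposition gives
\[
D_q(n,k,r) \;\ge\; n-k-\left\lfloor\tfrac{k}{r}-\tfrac{n}{r+1}\right\rfloor-\left\lfloor\tfrac{n}{r+1}\right\rfloor.
\]
Combining this with $D_q(n,k,r)\le d_{\text{opt}}(n,k,r)=n-k-\lceil k/r\rceil+2$, it suffices to prove the elementary arithmetic inequality
\[
\left\lfloor\tfrac{k}{r}-\tfrac{n}{r+1}\right\rfloor+\left\lfloor\tfrac{n}{r+1}\right\rfloor \;\le\; \left\lceil\tfrac{k}{r}\right\rceil + 1.
\]
I would prove this by writing $n/(r+1) = \lfloor n/(r+1)\rfloor + \theta$ with $0<\theta<1$ (here $\theta\ne 0$ since $(r+1)\nmid n$), so that $\lfloor k/r - n/(r+1)\rfloor = \lfloor k/r - \theta\rfloor - \lfloor n/(r+1)\rfloor$, whence the left side equals $\lfloor k/r-\theta\rfloor \le \lfloor k/r\rfloor \le \lceil k/r\rceil$, which is even stronger than needed. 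Thus in this case $D_q(n,k,r)\ge d_{\text{opt}}(n,k,r)-1$, and together with the upper bound $D_q \le d_{\text{opt}}$ we get $D_q(n,k,r)\in\{d_{\text{opt}}-1,d_{\text{opt}}\}$.

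Finally I would assemble the pieces: in every case the value $D_q(n,k,r)$ has been sandwiched into the two-element set $\{d_{\text{opt}}(n,k,r)-1,\,d_{\text{opt}}(n,k,r)\}$, which is the claim. The only subtlety worth a remark in the write-up is that Theorem~\ref{Thm:MinDistanceSet} drops the hypothesis $n\not\equiv 1\bmod r+1$ present in the proposition, so the ``$n\equiv 1$'' analysis (showing $d_{\text{opt}}\le 1$ there) genuinely has to be carried out rather than cited.
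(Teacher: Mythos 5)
Your plan works in the cases where Proposition~\ref{Thm:construction} applies, and your floor/ceiling arithmetic in the $b>1$ case (writing $n/(r+1)=\lfloor n/(r+1)\rfloor+\theta$) is correct and is essentially the paper's Eq.~\eqref{Eq:optimijaalaraja} in different clothing. However, there is a genuine gap in the case you flag at the end as ``the only subtlety,'' namely $n\equiv 1\bmod (r+1)$.

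You propose to dispose of that case by ``showing $d_{\text{opt}}\le 1$ there,'' but this is false whenever $n-\lceil n/(r+1)\rceil\ge k$ is still satisfied. For a concrete counterexample take $(n,k,r)=(10,3,2)$: then $n=3\cdot 3+1$, so $n\equiv 1\bmod 3$ and the construction of Subsection~\ref{Subsec:Construction} does not cover these parameters, yet
\[
d_{\text{opt}}(10,3,2)=10-3-\left\lceil\tfrac{3}{2}\right\rceil+2=7,
\]
which is nowhere near $\{0,1\}$. So you cannot just read off $D_q=0$ from the definition and hope that $d_{\text{opt}}$ is tiny; the theorem would be false under that reading. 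What the paper actually does for $b=1$ is exhibit a \emph{different} construction: take the generating matrix $\mathbf{G}$ produced by Subsection~\ref{Subsec:Construction} for the parameters $(n-1,k,r)$ (note $n-1\equiv 0\bmod (r+1)$, and $ar\ge k$ still holds, so the construction covers $(n-1,k,r)$ and is optimal there by Proposition~\ref{Thm:construction}), then replicate one column. The resulting $k\times n$ matrix generates an $(n,k,r)$-LRC with minimum distance at least $d_{\text{opt}}(n-1,k,r)=d_{\text{opt}}(n,k,r)-1$, which closes the gap. Your write-up needs this column-replication step (or some equivalent construction); the arithmetic you were planning to do in its place cannot succeed. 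The other degenerate branch you identify, $n-\lceil n/(r+1)\rceil<k$, is handled correctly in the paper exactly as you sketch: there one does verify $d_{\text{opt}}\le 1$ (splitting on whether $(r+1)\mid n$), and this part of your plan is fine.
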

\begin{proof}
Write $n=a(r+1)+b$ with $0 \leq b < r+1$.

Suppose first that
\begin{equation}\label{Eq:NotConstruction}
n-\left\lceil\frac{n}{r+1}\right\rceil +1 \leq k.
\end{equation}

If $r+1$ divides $n$ then the Equation \ref{Eq:NotConstruction} has the form
$
ar +1 \leq k
$
and hence
\[
n-k-\left\lceil \frac{k}{r}\right\rceil+2 \leq a(r+1)-(ar+1)-(a+1)+2=0.
\]
So it is impossible to have a code for parameters $(n,k,r)$ and hence $D_q(n,k,r)=d_{\text{opt}}(n,k,r)=0$.

If $r+1$ does not divide $n$ then the Equation \ref{Eq:NotConstruction} has the form
$
ar+b \leq k
$
and hence
\[
n-k-\left\lceil \frac{k}{r}\right\rceil+2 \leq a(r+1)+b-(ar+b)-(a+1)+2  \leq 1.
\]
Hence $D_q(n,k,r) \geq 0 \geq d_{\text{opt}}(n,k,r)-1$.

Suppose then that $n-\left\lceil\frac{n}{r+1}\right\rceil \geq k$. Now we can use Theorem \ref{Thm:construction}.

If $b=0$ then the claim is true by the Proposition \ref{Thm:construction}.

Assume $b=1$ and $\mathbf{G}$ is a generating matrix of a linear locally repairable code for parameters $(n-1,k,r)$ and minimum distance $d_{\text{opt}}(n-1,k,r)$. Replicate any column in $\mathbf{G}$ and get a generating matrix for a linear locally repairable code for parameters $(n,k,r)$ and minimum distance $d_{\text{opt}}(n,k,r)-1$.

Assume $b>1$. Then
\[
D_q(n,k,r) \geq n-k-\left\lfloor\frac{k}{r}-\frac{n}{r+1}\right\rfloor-\left\lfloor\frac{n}{r+1}\right\rfloor
\]
and hence
\begin{equation}\label{Eq:optimijaalaraja}
\begin{split}
& d_{\text{opt}}(n,k,r) - D_q(n,k,r) \\
\leq & \left\lfloor\frac{k}{r}-\frac{n}{r+1}\right\rfloor -\left\lceil\frac{k}{r}\right\rceil + \left\lfloor\frac{n}{r+1}\right\rfloor +2 \\
\leq & \left\lfloor-\frac{n}{r+1}\right\rfloor + \left\lfloor\frac{n}{r+1}\right\rfloor +2 =  -1 +2 =1.\\
\end{split}
\end{equation}

\end{proof}

So we know that if $d_{\text{opt}}(n,k,r) \geq 2$ then we have a linear locally repairable code for parameters $(n,k,r)$. If $d_{\text{opt}}(n,k,r) = 0$ then it is impossible to have a linear locally repairable code for parameters $(n,k,r)$. However, if $d_{\text{opt}}(n,k,r) = 1$ then we do not know whether there exists a linear locally repairable code for parameters $(n,k,r)$.

Theorem \ref{Thm:MinDistanceSet} gives a lower bound for the minimum distance. In fact we can say little more in a certain case.

Below, the \emph{fractional part} of of $x$ is denoted by $\{x\}$, \emph{i.e.},  $\{x\} = x - \lfloor x\rfloor$.
\begin{thm}\label{Thm:MinDistanceWithFracPart}
Suppose $q>2\binom{n}{k-1}$, $k<n$, $\left\{\frac{k}{r}\right\} < \left\{\frac{n}{r+1}\right\}$, and $r$ does not divide $k$. Then
\[
D_q(n,k,r)=d_{\text{opt}}(n,k,r).
\]
\end{thm}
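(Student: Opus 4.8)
The plan is to reduce this statement to Proposition \ref{Thm:construction} by showing that, under the extra hypotheses $\{k/r\} < \{n/(r+1)\}$ and $r \nmid k$, the lower bound for $D_q(n,k,r)$ given there actually coincides with $d_{\text{opt}}(n,k,r)$, so that combined with the universal upper bound \eqref{upperboundWithq} we get equality. First I would dispose of the degenerate cases: if $n - \lceil n/(r+1)\rceil + 1 \le k$, then (as computed in the proof of Theorem \ref{Thm:MinDistanceSet}) $d_{\text{opt}}(n,k,r) \le 1$, and since $r \nmid k$ one can check the bound forces $d_{\text{opt}} = 0$ in the $(r+1)\mid n$ subcase and handle the remaining tiny subcase directly; in any event the claim is either vacuous or immediate there. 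So I may assume $n - \lceil n/(r+1)\rceil \ge k$ and invoke Proposition \ref{Thm:construction}.

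Next I split on $b$, where $n = a(r+1) + b$, $0 \le b < r+1$. If $b = 0$, Proposition \ref{Thm:construction} already gives $D_q(n,k,r) = d_{\text{opt}}(n,k,r)$ with no further work, so assume $b \ne 0$ (note $b = 1$ is excluded since $n \equiv 1 \bmod r+1$ would make the hypothesis $n \not\equiv 1$ fail — actually the theorem as stated does not assume $n\not\equiv 1$, so I should double-check: if $b=1$ then $\{n/(r+1)\} = 1/(r+1)$ is the smallest possible nonzero fractional part with denominator $r+1$, and $\{k/r\}$ has denominator dividing $r$; I would argue the strict inequality $\{k/r\} < 1/(r+1)$ together with $r\nmid k$ is impossible, so $b=1$ cannot occur under the hypotheses). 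For $b > 1$, Proposition \ref{Thm:construction} gives $D_q(n,k,r) \ge n - k - \lfloor k/r - n/(r+1)\rfloor - \lfloor n/(r+1)\rfloor$, so it suffices to show
\[
\left\lfloor \frac{k}{r} - \frac{n}{r+1} \right\rfloor + \left\lfloor \frac{n}{r+1} \right\rfloor + 2 \le \left\lceil \frac{k}{r} \right\rceil,
\]
i.e. that $\mu$ from the proposition equals $\lceil k/r\rceil$, matching $d_{\text{opt}}$.

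The heart of the argument is therefore the floor/ceiling identity above, and this is where the two extra hypotheses enter. Writing $k/r = \lfloor k/r\rfloor + \{k/r\}$ and $n/(r+1) = \lfloor n/(r+1)\rfloor + \{n/(r+1)\}$, the quantity $\lfloor k/r - n/(r+1)\rfloor$ equals $\lfloor k/r\rfloor - \lfloor n/(r+1)\rfloor$ plus a correction of $-1$ exactly when $\{k/r\} < \{n/(r+1)\}$ (and $0$ otherwise). So under the first hypothesis the left side becomes $\lfloor k/r\rfloor - 1 + 2 = \lfloor k/r\rfloor + 1$, and under the second hypothesis ($r \nmid k$, so $\{k/r\} > 0$) we have $\lceil k/r\rceil = \lfloor k/r\rfloor + 1$; hence both sides equal $\lfloor k/r\rfloor + 1$ and we are done. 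I expect the main obstacle to be purely bookkeeping: verifying carefully that the correction term in the floor-of-a-difference identity is exactly as claimed (it uses $\{k/r\} + (1 - \{n/(r+1)\})$ falling in $(0,1)$ versus $[1,2)$), and confirming that the small-$b$ and degenerate boundary cases are genuinely covered or excluded by the hypotheses — in particular the $b = 1$ exclusion, which is the one subtle point since the theorem statement omits the $n \not\equiv 1 \bmod r+1$ assumption present in Proposition \ref{Thm:construction}.
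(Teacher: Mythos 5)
Your main argument matches the paper's exactly: same exclusion of $b=0,1$ via the fractional-part hypothesis (which also disposes of the missing $n\not\equiv 1 \bmod (r+1)$ assumption, as you correctly note), and the same floor/ceiling computation showing $\lfloor k/r - n/(r+1)\rfloor + \lfloor n/(r+1)\rfloor + 2 = \lfloor k/r\rfloor + 1 = \lceil k/r\rceil$ under the two hypotheses; this is precisely the paper's rewriting $\lfloor k/r - n/(r+1)\rfloor -\lceil k/r\rceil + \lfloor n/(r+1)\rfloor + 2 = \lfloor \{k/r\}-\{n/(r+1)\}\rfloor + 1 = 0$.

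The one place your sketch is not tight is the degenerate case $n - \lceil n/(r+1)\rceil + 1 \leq k$. You describe it as ``either vacuous or immediate,'' but this is misleading: when the construction does not apply we have $D_q(n,k,r)=0$ by definition, so the theorem would actually be \emph{false} if $d_{\text{opt}}(n,k,r)=1$ there; you must prove $d_{\text{opt}}=0$, not merely $\leq 1$. Also, focusing on the $(r+1)\mid n$ subcase is a red herring, since $\{n/(r+1)\}>\{k/r\}\geq 0$ already rules it out. The step you are missing is the one the paper supplies: since $r+1\nmid n$, the degenerate inequality reads $ar+b\leq k$; equality $ar+b=k$ is impossible because (with $0<b<r$) it would give $\{k/r\}=b/r > b/(r+1)=\{n/(r+1)\}$, contradicting the hypothesis; hence $ar+b<k$, which forces $n-k-\lceil k/r\rceil+2 \leq 1-\lceil (b+1)/r\rceil \leq 0$ and thus $d_{\text{opt}}=0$. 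Once that strict inequality is established, your reduction goes through and the proof is complete.
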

\begin{proof}
Write $n=a(r+1)+b$ with $0 \leq b < r+1$. If $b= 0 \text{ or } 1$ then $\left\{\frac{k}{r}\right\} \geq \left\{\frac{n}{r+1}\right\}$ so we may assume that this is not the case.

Suppose first that $n-\left\lceil\frac{n}{r+1}\right\rceil \geq k$. By studying the Equation \ref{Eq:optimijaalaraja} again we notice that
\begin{equation}\label{Eq:optimijaalarajaTarkemmin}
\begin{split}
& d_{\text{opt}}(n,k,r) - D_q(n,k,r) \\
\leq & \left\lfloor\frac{k}{r}-\frac{n}{r+1}\right\rfloor -\left\lceil\frac{k}{r}\right\rceil + \left\lfloor\frac{n}{r+1}\right\rfloor +2 \\
= & \left\lfloor\left\{\frac{k}{r}\right\}-\left\{\frac{n}{r+1}\right\}\right\rfloor +1 =0 \\
\end{split}
\end{equation}

Suppose then that $n-\left\lceil\frac{n}{r+1}\right\rceil +1 \leq k$. Since $\left\{\frac{k}{r}\right\} < \left\{\frac{n}{r+1}\right\}$ we know that $r+1$ cannot divide $n$ and hence we have $ar+b \leq k$. It is impossible that $ar+b = k$. Indeed, then we would have $\left\{\frac{k}{r}\right\}=\left\{\frac{b}{r}\right\} > \left\{\frac{b}{r+1}\right\}=\left\{\frac{n}{r+1}\right\}$. Hence $ar+b < k$ and
\begin{equation}
\begin{split}
d_{\text{opt}}(n,k,r) & = \max\left\{n-k-\left\lceil\frac{k}{r}\right\rceil+2,0\right\} \\
& \leq \max\left\{1-\left\lceil\frac{b+1}{r}\right\rceil,0\right\}=0
\end{split}
\end{equation}
and hence $D_q(n,k,r)=d_{\text{opt}}(n,k,r)$.
\end{proof}

\section{Constructing Optimal LRCs over $\mathbb{F}_4$} \label{Subsec:Construction-Optimal}

In this section we give some constructions of optimal LRCs over the field of four elements $\mathbb{F}_4$ for certain values of $(r,d)$. Our LRCs will be described in the setting of matrices with different operators as entries.

\subsection{Matrix Representation}

We represent the elements of $\mathbb{F}_4$ as $\{00,01,10,11\}$, such that the addition of elements in $\mathbb{F}_4$ can be considered as bitwise addition without carry (e.g. $01 + 11 = 10$). In our construction of optimal LRCs over $\mathbb{F}_4$ we will use the operators $\alpha$, $\alpha^2$, $\beta$, $\beta^2$, $\beta^3$, $1 \hbox{ and } 0$ on $\mathbb{F}_4$ to $\mathbb{F}_4$ defined as
$$
\begin{array}{rrrr}
\alpha(00) = 00, & \alpha(01) = 10, & \alpha(10) = 11 & \alpha(11) = 01,\\
\alpha^2(00) = 00, & \alpha^2(01) = 11, & \alpha^2(10) = 01 & \alpha^2(11) = 10,\\
\beta(00) = 01, & \beta(01) = 10, & \beta(10) = 11 & \beta(11) = 00,\\
\beta^2(00) = 10, & \beta^2(01) = 11, & \beta^2(10) = 00 & \beta^2(11) = 01,\\
\beta^3(00) = 11, & \beta^3(01) = 00, & \beta^3(10) = 01 & \beta^3(11) = 10,\\
1(00) = 00, & 1(01) = 01, & 1(10) = 10 & 1(11) = 11,\\
0(00) = 00, & 0(01) = 00, & 0(10) = 00 & 0(11) = 00.
\end{array}
$$

A code $C$ is represented by a $k \times n$ matrix $F$. The entries of the matrix are the operators $\alpha, \alpha^2, \beta, \beta^2, \beta^3, 1 \hbox{ and } 0$. The code $C$ consists of the following codewords
$$
C = \{\boldsymbol{y} \in \mathbb{F}_4^n : y_j = F_{1,j}(x_1) + \ldots + F_{k,j}(x_k) \hbox{ for } \boldsymbol{x} \in \mathbb{F}_4^k\}.
$$

\subsection{Optimal LRCs over $\mathbb{F}_4$}

Let $A$ and $B$ be the following matrices
$$
A = \left (
\begin{array}{cccc}
1 & 0 & 0 & 1 \\
0 & 1 & 0 & 1 \\
0 & 0 & 1 & 1
\end{array}
\right )
\quad \hbox{and} \quad
B = \left (
\begin{array}{ccc}
0 & 1 & 1 \\
0 & \alpha & \alpha \\
0 & \alpha^2 & \alpha^2
\end{array}
\right ).
$$
For $i \geq 1$, let $F_i^1(3,3)$ be the $(i+1) \times (i+1)$-block matrix
$$
\begin{small}
F_i^1(3,3) =
\left (
\begin{array}{c | c | c | ccc}
 A & & & & B   & \\
\hline
& \ddots & & & \vdots & \\
\hline
& & A  &  & B & \\
\hline
&&& 1 & \alpha & \alpha^2
\end{array}
\right ),
\end{small}
$$
where the entires of the empty blocks are 0-operators and the first $i$ diagonal blocks are $A$-blocks.

\begin{thm} \label{th:linear-optimal-(3,3)}
The matrix $F_i^1(3,3)$ defines a locally repairable code $C$ over $\mathbb{F}_4$ with parameters $(n,k,d,r) = (4i+3, 3i+1,3,3)$ for $i \geq 1$.
\end{thm}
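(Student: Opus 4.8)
The plan is to check the four claimed parameters $(n,k,d,r)=(4i+3,3i+1,3,3)$ one at a time; only the minimum distance needs real work. First, every operator occurring in $F_i^1(3,3)$ — that is, $0,1,\alpha,\alpha^2$, inside the blocks $A$, $B$ and in the bottom row — is $\mathbb{F}_4$-linear, so $C$ is a linear code and $F_i^1(3,3)$ may be treated as an ordinary $k\times n$ generator matrix over $\mathbb{F}_4$. Counting block sizes gives $k=3i+1$ rows and $n=4i+3$ columns, and for the dimension it suffices to exhibit $k$ columns of rank $k$: the three ``identity'' columns of each diagonal $A$-block, together with the first column of the last block-column (whose only nonzero entry is the $1$ in the bottom row, since the first column of $B$ vanishes), are exactly the $k$ standard basis vectors of $\mathbb{F}_4^k$; call this set of columns $B_0$. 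Hence $\operatorname{rank}F_i^1(3,3)=k$, the encoding map is injective, and $\dim C=k$.

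For the locality, the columns partition into $i$ groups of four (the $A$-block-columns) and one group of three (the last block-column). Since the fourth column of $A$ equals the sum of its first three and the off-diagonal blocks of an $A$-block-column are zero, the four codeword symbols indexed by an $A$-block sum to zero; and, using $\alpha+\alpha^2=1$ in $\mathbb{F}_4$, the three symbols of the last block-column also sum to zero (the two identical ``$B$''-contributions cancel in characteristic $2$, and the $\{1,\alpha,\alpha^2\}$-multiples of $x_k$ add up to $x_k$, i.e.\ to the first symbol). So each symbol is a linear combination of at most three others lying in its own group, and $C$ has all-symbol locality $r=3$ (a three-element group only gives locality $2\le 3$, which is fine).

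For the minimum distance, the bound $d\le 3$ is immediate: $C$ has locality $3$, so \eqref{upperbound} yields $d\le n-k-\lceil k/r\rceil+2=(i+2)-(i+1)+2=3$ — equivalently, the bottom row of $F_i^1(3,3)$ is a codeword of weight exactly $3$. The real content is $d\ge 3$, i.e.\ that there is no nonzero codeword of weight $1$ or $2$; equivalently, deleting any one or any two columns of the generator matrix must leave the rank equal to $k$. Weight $1$ is ruled out at once, since every column lies in the span of the other columns of its group. For weight $2$ I would delete two columns $c,c'$ and show that the remaining $n-2$ columns still span $\mathbb{F}_4^k$, arguing by cases according to how many of $c,c'$ belong to $B_0$ and, when they do, to which group. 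In the worst case both lie in $B_0$; then the surviving columns of $B_0$ span a codimension-$2$ subspace $U$, and one recovers the two missing dimensions using the fourth column of the relevant $A$-block(s) (which modulo $U$ is a single missing basis vector) and/or the two columns of the last block-column outside $B_0$, which serve because within any $A$-block the nonzero entries $1,\alpha,\alpha^2$ of a column of $B$ are pairwise distinct and because the last coordinates $\alpha$ and $\alpha^2$ of those two columns differ; the cases with at most one deleted column in $B_0$ are easier. This elementary but fiddly case analysis is the main obstacle.

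An alternative route to $d\ge 3$ is to invoke Theorem \ref{Thm:mindistanceTamo}: the circuits of the matroid of $F_i^1(3,3)$ are the $i$ four-element circuits carried by the $A$-blocks, the three-element circuit carried by the last block-column, and ``long'' circuits of size $\ge k+1$ through the two last-block-column vectors outside $B_0$; one then checks that the smallest $\mu$ for which every nontrivial union of $\mu$ circuits has size $\ge \mu+k$ is $\mu=i+1$ — a union of $j\le i$ four-element circuits has size $4j<j+k$, while any nontrivial union of $i+1$ circuits has size $\ge n-1$ — whence $d=n-k-\mu+2=3$.
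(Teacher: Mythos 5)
Your proposal takes a genuinely different route to the heart of the theorem than the paper does, but it stops short of a complete argument at precisely the step you yourself flag as the main difficulty, namely $d\geq 3$.

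On the route: the paper argues entirely at the level of codewords and Hamming distances. It fixes two messages $\boldsymbol{w},\boldsymbol{x}$ and splits on $d(\boldsymbol{w},\boldsymbol{x})\in\{1,2,\geq 3\}$; the case $\geq 3$ is handled by the embedded identity columns, the case $=1$ by the observation that every row of $F_i^1(3,3)$ carries at least three nonzero operators, and the case $=2$ by a further split on whether the two differing message coordinates lie in the same or different $A$-blocks, with the same-block subcase reduced to a small hand-checkable fact about the operator values $1,\alpha,\alpha^2$. You instead observe that all operators appearing are $\mathbb{F}_4$-linear, so $F_i^1(3,3)$ is an ordinary generator matrix, and you aim to show that every $(n-2)$-subset of its columns still has rank $k$; that is mathematically equivalent to $d\geq 3$ and arguably cleaner, since it makes the structure transparent via your basis $B_0$ and lets you also read off the $\mu=i+1$ count needed for Theorem~\ref{Thm:mindistanceTamo}. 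Your treatment of $n$, $k$, $r$ and of $d\leq 3$ is correct and complete.

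The gap is in $d\geq 3$. In the direct rank route you explicitly defer the case analysis (``This elementary but fiddly case analysis is the main obstacle''), and without it the lower bound is not established. In the matroid route you assert, without proof, that every circuit other than the $i$ four-element $A$-circuits and the three-element last-block circuit has size at least $k+1$; this is in fact true (any circuit using one of the two last columns but not both must cover rows $1,\ldots,3i$, which forces at least three columns per $A$-block since $[1,\alpha,\alpha^2]^T$ is not in the span of any two columns of $A$, and a circuit using both of the last two columns but not $e_k$ is impossible since no other column touches row $k$), but it is exactly the content that has to be supplied. Either route can be finished and would be an acceptable alternative to the paper's Hamming-distance computation, but as written the proposal only proves the upper bound $d\leq 3$ in full; the matching lower bound remains a sketch.
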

\begin{proof}
Let $f:\mathbb{F}_4^{3i+1} \rightarrow \mathbb{F}_4^{4i+3}$ denote the mapping given by the matrix $F_i^1(3,3)$. Now, $f$ is injective, because
\begin{equation} \label{eq:identity_map_in_f}
\begin{array}{l}
y_{4j-3} = x_{3j-2}, y_{4j-2} = x_{3j-1}, y_{4j-1} = x_{3j} \hbox{ and}\\
y_{4i+1} = x_{3i+1},
\end{array}
\end{equation}
for $1 \leq j \leq i$ and $f(\boldsymbol{x}) = \boldsymbol{y}$. Since $f$ is injective and $F_i^1(3,3)$ is a $(3i+1) \times (4i+3)$-matrix it follows that $(n,k) = (4i+3,3i+1)$.

The code $C$ has repair locality $r=3$, since from the fact that $1(x) + \alpha(x) + \alpha^2(x) = 00$ for $x \in \mathbb{F}_4$ we may deduce that
$$
y_{4i+1} + y_{4i+2} + y_{4i+3} = 00,
$$
for $f(\boldsymbol{x}) = \boldsymbol{y}$. Moreover, we have that
$$
y_{4j-3} + y_{4j-2} + y_{4j-1} + y_{4j} = 00,
$$
for $f(\boldsymbol{x}) = \boldsymbol{y}$ and $1 \leq j \leq i$.

Let $d(\boldsymbol{u}, \boldsymbol{v})$ denote the distance between $\boldsymbol{u}, \boldsymbol{v} \in \mathbb{F}_4^m$.
Suppose $\boldsymbol{w}$ and $\boldsymbol{x}$ are two elements of $\mathbb{F}_4^{3i+1}$ such that $d(\boldsymbol{w}, \boldsymbol{x}) \geq 3$. Then, by (\ref{eq:identity_map_in_f}), we deduce that $d(f(\boldsymbol{w}), f(\boldsymbol{x})) \geq 3$.

Suppose $\boldsymbol{w}$ and $\boldsymbol{x}$ are two elements of $\mathbb{F}_4^{3i+1}$ such that $d(\boldsymbol{w}, \boldsymbol{x}) = 1$. We note that every row of $F_i^1(3,3)$ has at least three entries $a$, $b$ and $c$ with operators  1, $\alpha$ or $\alpha^2$. In particular, this yields that the coefficients $a$, $b$ and $c$ of $f(\boldsymbol{w})$ differ from these coefficients of $f(\boldsymbol{x})$, and hence $d(f(\boldsymbol{w}), f(\boldsymbol{x})) \geq 3$.

Suppose $\boldsymbol{w}$ and $\boldsymbol{x}$ are two elements of $\mathbb{F}_4^{3i+1}$ such that $d(\boldsymbol{w}, \boldsymbol{x}) = 2$. Let $a$ and $b$ be the index of the two coefficients in which $\boldsymbol{w}$ and $\boldsymbol{x}$ differ. Assume that row $a$ and row $b$ of $F_i^1(3,3)$ are in different horizontal blocks. Then there are at least three columns $e$, $g$ and $h$ of $F_i^1(3,3)$ such that one of the entries $(a,e)$ and $(b,e)$ is the 0-operator and the other one is the 1-operator, this property also holds for the entries $(a,g)$, $(b,g)$ and $(a,h)$, $(b,h)$. Consequently, $d(f(\boldsymbol{w}), f(\boldsymbol{x})) \geq 3$ when the rows $a$ and $b$ are in different horizontal blocks.

Now, suppose that row $a$ and row $b$ are in the same horizontal block of $F_i^1(3,3)$, i.e. row $a$ and $b$ are rows in a submatrix of the following form
$$
\left (
\begin{array}{c|c|c|c}
 \mathbf{0}& A & \mathbf{0} & B
 \end{array}
 \right).
$$
It is easy to check by hand that if
$$
y \neq y' \hbox{, } z \neq z' \hbox{ and } 1(y) + 1(z) = 1(y') + 1(z'),
$$
then
$$
\begin{array}{l}
1(y) + \alpha(z) \neq 1(y') + \alpha(z'),\\
1(y) + \alpha^2(z) \neq 1(y') + \alpha^2(z'),\\
\alpha(y) + \alpha^2(z) \neq \alpha(y') + \alpha^2(z').
\end{array}
$$
for $y,y',z,z' \in \mathbb{F}_4$. As a consequence of this fact and since there are two pair of entries $\{(a,g),(b,g)\}$ and $\{(a,h),(b,h)\}$ in $F_i^1(3,3)$ such that one of the entries in each pair is the 0-operator and the other entry is the 1-operator, we deduce that $d(f(\boldsymbol{w}), f(\boldsymbol{x})) \geq 3$. Hence $d \geq 3$ for the code.

Moreover, since
$$
4i + 3 - (3i + 1) - \left \lceil \frac{3i+1}{3} \right \rceil + 2 = 3
$$
we obtain that $C$ is an optimal $(4i+3,3i+1,3,3)$-LRC.
\end{proof}

Let $D$ be the following matrix
$$
D = \left (
\begin{array}{cccc}
0 & 0 & 1 & 1 \\
0 & 0 & \alpha & \alpha \\
0 & 0 & \alpha^2 & \alpha^2
\end{array}
\right ).
$$
For $i \geq 1$, let $F_i^2(3,3)$ be the $(i+1) \times (i+1)$-block matrix
$$
F_i^2(3,3) =
\left (
\begin{array}{c | c | c |cccc}
 A & & & & D   & &\\
\hline
& \ddots & & & \vdots & & \\
\hline
& & A  &  & D & & \\
\hline
&&& 1 & 0 & \beta & \beta^2\\
&&& 0 & 1 & \beta^2 & \beta
\end{array}
\right ),
$$
and let $F_i^1(3,4)$ be the $(i+1) \times (i+1)$-block matrix
$$
F_i^1(3,4) =
\left (
\begin{array}{c | c | c |cccc}
 A & & & & A   & &\\
\hline
& \ddots & & & \vdots & & \\
\hline
& & A  &  & A & & \\
\hline
&&& 1 & \beta & \beta^2 & \beta^3
\end{array}
\right ).
$$

With similar proof techniques as in Theorem \ref{th:linear-optimal-(3,3)} we can prove the following two theorems.

\begin{thm} \label{th:nonlinear-optimal-(3,3)}
The matrix $F_i^2(3,3)$ defines a locally repairable code $C$ over $\mathbb{F}_4$ with parameters $(n,k,d,r) = (4i+4, 3i+2,3,3)$ for $i \geq 1$.
\end{thm}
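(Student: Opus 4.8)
The plan is to follow, step for step, the three-part argument behind Theorem~\ref{th:linear-optimal-(3,3)}: read off a systematic part of the encoder to pin down $(n,k)$, produce a repair set of size $r+1=4$ around each coordinate, bound the minimum distance by a short case analysis on the Hamming distance of two preimages, and finally invoke the numerical identity that makes the code optimal. For the parameters, write $f\colon \mathbb{F}_4^{3i+2}\to\mathbb{F}_4^{4i+4}$ for the map attached to $F_i^2(3,3)$; the $j$-th diagonal block $A$ gives $y_{4j-3}=x_{3j-2}$, $y_{4j-2}=x_{3j-1}$, $y_{4j-1}=x_{3j}$ for $1\le j\le i$, and the $2\times 2$ identity sub-block in the lower-right corner gives $y_{4i+1}=x_{3i+1}$, $y_{4i+2}=x_{3i+2}$. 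Hence $f$ is injective and $(n,k)=(4i+4,3i+2)$.

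For the locality, observe that for $1\le j\le i$ the fourth column of $A$ is the sum of its first three columns, so $y_{4j-3}+y_{4j-2}+y_{4j-1}+y_{4j}=00$ and $\{4j-3,4j-2,4j-1,4j\}$ is a repair set of size $4$. For the last four coordinates I would try to exhibit the set $\{4i+1,4i+2,4i+3,4i+4\}$ as the remaining repair set: since in each $D$-block the two columns feeding $y_{4i+3}$ and $y_{4i+4}$ coincide, the contributions of $x_1,\dots,x_{3i}$ cancel in a suitable $\mathbb{F}_4$-combination of $y_{4i+1},\dots,y_{4i+4}$, and one is left with a relation involving only these four symbols and only the operators $1,\beta,\beta^2$ acting on $x_{3i+1},x_{3i+2}$; one then has to check that this relation determines each of the four symbols from the other three. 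Granting this, $C$ has all-symbol locality $r=3$.

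For the minimum distance, let $\boldsymbol{w}\neq\boldsymbol{x}$ in $\mathbb{F}_4^{3i+2}$. If $d(\boldsymbol{w},\boldsymbol{x})\ge 3$, the systematic identities above already force $d(f(\boldsymbol{w}),f(\boldsymbol{x}))\ge 3$. If $d(\boldsymbol{w},\boldsymbol{x})=1$, the row of $F_i^2(3,3)$ indexed by the changed coordinate has at least three non-zero entries, and every non-zero operator ($1,\alpha,\alpha^2,\beta,\beta^2$) is a bijection, so at least three output coordinates change. If $d(\boldsymbol{w},\boldsymbol{x})=2$ with the two changed rows in different horizontal blocks, there are at least three columns in which exactly one of those two rows carries a non-zero (hence bijective) operator while the other carries $0$, again forcing three output differences. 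The one remaining case, two changed rows in the same horizontal block, is handled by a finite check in the spirit of the $y\neq y'$, $z\neq z'$ computation in the proof of Theorem~\ref{th:linear-optimal-(3,3)}, now performed for the operator pairs that can appear inside a single block of $F_i^2(3,3)$ (those coming from $A$, from $D$, and from the bottom two-row block, so involving $\beta$ and $\beta^2$). This gives $d\ge 3$; and since
\[
(4i+4)-(3i+2)-\left\lceil\frac{3i+2}{3}\right\rceil+2=(4i+4)-(3i+2)-(i+1)+2=3,
\]
the bound \eqref{upperbound} forces $d\le 3$. Hence $d=3$ and $C$ is an optimal $(4i+4,3i+2,3,3)$-LRC.

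The step I expect to be the real obstacle is the treatment of the block containing the $\beta$-operators --- both in the locality argument (the last repair set) and in the same-block subcase of the distance argument. Unlike the operators $\alpha,\alpha^2$, which are $\mathbb{F}_4$-linear and satisfy $1+\alpha+\alpha^2=0$, the operators $\beta,\beta^2$ are genuinely nonlinear, so the cancellations and the ``no spurious coincidence'' facts cannot be read off formally and must be verified by direct evaluation on the handful of relevant inputs; checking that the bottom $2\times 4$ block has indeed been chosen so that $\{4i+1,\dots,4i+4\}$ is a bona fide locality set and so that no two preimages at Hamming distance $2$ collapse to output distance $2$ is the crux of the whole proof.
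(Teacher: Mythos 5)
Your plan mirrors the paper's own (unwritten) proof strategy for this theorem, and you correctly identify the crux: verifying that the bottom $2 \times 4$ block of $F_i^2(3,3)$ makes $\{4i+1,\dots,4i+4\}$ a bona fide repair group, and that no two inputs differing only in coordinates $3i+1$, $3i+2$ collapse to output distance $\le 2$. You explicitly defer this ``handful of direct evaluations'' --- but when one actually performs them, they fail, so the plan cannot be completed as written.

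Concretely, put $g(x):=\beta(x)+\beta^2(x)$. From the operator tables, $g(00)=g(10)=11$ and $g(01)=g(11)=01$, so $g$ is two-to-one. This breaks both halves of the claim. For locality: the only relation among $y_{4i+1},\dots,y_{4i+4}$ that does not involve $x_1,\dots,x_{3i}$ is $y_{4i+3}+y_{4i+4}=g(y_{4i+1})+g(y_{4i+2})$, so $y_{4i+2},y_{4i+3},y_{4i+4}$ determine $y_{4i+1}=x_{3i+1}$ only up to the two-element fibre of $g$; and since $x_{3i+1}$ appears in no column of $F_i^2(3,3)$ other than $4i+1$, $4i+3$, $4i+4$, one can check by the same kind of freedom argument that no $3$-element repair set for $y_{4i+1}$ exists, so $r=3$ fails. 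For the minimum distance: with $i=1$ take $\boldsymbol{w}=(00,00,00,00,00)$ and $\boldsymbol{x}=(00,00,00,10,10)$; then $\beta(x_4)+\beta^2(x_5)=11$ and $\beta^2(x_4)+\beta(x_5)=11$ in both cases, so $y_7=y_8=11$ for both inputs, and $f(\boldsymbol{w})$, $f(\boldsymbol{x})$ differ only at positions $5$, $6$, giving $d\le 2$. So the step you flagged is not merely unchecked --- it is false for $F_i^2(3,3)$ as printed, and your sketch, if completed, would disprove the statement rather than prove it. The construction (most plausibly the choice of the $\beta$, $\beta^2$ entries in the bottom block, or the repeated column of $D$) needs to be repaired before any proof along these lines can go through.
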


\begin{thm} \label{th:nonlinear-optimal-(3,4)}
The matrix $F_i^1(3,4)$ defines a locally repairable code $C$ over $\mathbb{F}_4$ with parameters $(n,k,d,r) = (4i+4, 3i+1,3,4)$ for $i \geq 1$.
\end{thm}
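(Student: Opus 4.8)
The plan is to reproduce the three stages of the proof of Theorem~\ref{th:linear-optimal-(3,3)}: fix $n$ and $k$ from injectivity of the encoding map, read the locality off explicit local relations, and bound the minimum distance by a case analysis on the distance between two messages. Writing $f:\mathbb{F}_4^{3i+1}\to\mathbb{F}_4^{4i+4}$ for the map of $F_i^1(3,4)$, the first three columns of each diagonal $A$-block give $y_{4j-3}=x_{3j-2}$, $y_{4j-2}=x_{3j-1}$, $y_{4j-1}=x_{3j}$ for $1\le j\le i$, and since $y_{4i+1}=x_1+x_4+\cdots+x_{3i-2}+x_{3i+1}$ the output also determines $x_{3i+1}$; hence $f$ is injective and $(n,k)=(4i+4,3i+1)$. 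For the locality I would record the local relations directly: the fourth column of $A$ equals the sum of its first three and $1(z)+1(z)=00$, which gives $y_{4j-3}+y_{4j-2}+y_{4j-1}+y_{4j}=00$ for $1\le j\le i$; and in the last block-column the contribution of each message coordinate $x_m$ with $m\le 3i$ passes, through the $A$-blocks, into exactly two of $y_{4i+1},\dots,y_{4i+4}$ and so cancels there, while $1(z)+\beta(z)+\beta^2(z)+\beta^3(z)=00$ for every $z\in\mathbb{F}_4$ (a four-line check), so $y_{4i+1}+y_{4i+2}+y_{4i+3}+y_{4i+4}=00$. Each of the $4i+4$ symbols thus appears in one of these $i+1$ parity relations of size four, so the code has all-symbol locality $3$, in particular the claimed $r=4$.

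For the minimum distance I would imitate the case split of the proof of Theorem~\ref{th:linear-optimal-(3,3)} on $t:=d(\boldsymbol{w},\boldsymbol{x})$ for $\boldsymbol{w}\neq\boldsymbol{x}$ in $\mathbb{F}_4^{3i+1}$. If $t\ge 3$, at least three outputs differ: the altered message coordinates among $x_1,\dots,x_{3i}$ produce that many identity coordinates, and if $x_{3i+1}$ is also altered then at least two of $y_{4i+1},\dots,y_{4i+4}$ change as well (their change vector is nonzero and, lying in a sum-zero subspace, cannot be supported on a single coordinate). If $t=1$, every row of $F_i^1(3,4)$ has at least three entries acting as bijections of $\mathbb{F}_4$ (four copies of the identity, in distinct columns, for the first $3i$ rows, and $1,\beta,\beta^2,\beta^3$ for the last row), so at least three outputs change. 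If $t=2$, one splits on whether the two altered rows lie in one horizontal block: if they lie in different blocks, three columns carry the $0$-operator in one of the rows and a bijection in the other; if they lie in the same $A$-block, one uses that this block is repeated---on the diagonal and in the last block-column---so that any two of its rows read $(1,0)$ in one column and $(0,1)$ in another within each of these two block-columns, and a short by-hand check (the analogue of the $1,\alpha,\alpha^2$ lemma in the proof of Theorem~\ref{th:linear-optimal-(3,3)}, now also using $\beta,\beta^2,\beta^3$ when one of the two rows is the last one) supplies a third differing coordinate. For the matching upper bound one has to exhibit an extremal pair of codewords outright, because with $r=4$ the bound~\eqref{upperbound} does not by itself determine the distance.

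The main obstacle is this distance analysis. The $t=2$ same-block case needs a careful, pair-by-pair exploitation of the repeated $A$-block, and the computation there genuinely uses that $\beta,\beta^2,\beta^3$ are not $\mathbb{F}_4$-linear. More seriously, the code is nonlinear, so one cannot argue by minimum weight; and, unlike the $(3,3)$ construction where the Singleton-type bound yields tightness for free, one must here analyse the associated $\mathbb{F}_2$-linear \emph{difference code} directly. For $i=1$ its codewords are $(a,b,c,\,a+b+c,\,a+d,\,b+L(d),\,c+d,\,a+b+c+L(d))$ with $(a,b,c,d)\in\mathbb{F}_4^4$ and $L$ the $\mathbb{F}_2$-linear part of $\beta$, and computing the minimum $\mathbb{F}_4$-symbol weight of this code---and of its analogue for general $i$---is where the real work lies and is what determines the value of $d$.
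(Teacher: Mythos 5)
You are doing exactly what the paper asks: its ``proof'' of this theorem is the single sentence ``with similar proof techniques as in Theorem~\ref{th:linear-optimal-(3,3)} we can prove the following two theorems,'' so any full argument must be the reader's own adaptation, and yours is the correct one. Your injectivity step, the derivation of the parity relation $y_{4i+1}+y_{4i+2}+y_{4i+3}+y_{4i+4}=00$ from $1(z)+\beta(z)+\beta^{2}(z)+\beta^{3}(z)=00$, and your remark that all $i+1$ parity groups have size four (so the genuine locality is $3$, with $r=4$ a slack claim) are all right. You also put your finger on the one place where ``similar techniques'' actually break: for $r=3$, bound~\eqref{upperbound} gives $d\le(4i+4)-(3i+1)-\lceil(3i+1)/3\rceil+2=4$, and for $r=4$ it is looser still, so the Singleton-type step that closed Theorem~\ref{th:linear-optimal-(3,3)} never delivers $d\le 3$.

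The loose thread you leave, however, resolves against the statement as printed. If you push your own case split through, every branch yields at least \emph{four} differing coordinates, not three: for $t=1$ each row of $F_i^1(3,4)$ has exactly four non-zero, bijective entries; for $t=2$ with $m_1,m_2$ in the same diagonal $A$-block the two identity outputs plus the columns $4i+l_1$ and $4i+l_2$ of the replicated $A$ already give four (with $\delta_{3i+1}=0$ there is no cancellation there), the different-block case is easier still, and the case $m_2=3i+1$ gives two from the diagonal block of $m_1$ plus two of $y_{4i+1},\dots,y_{4i+3}$ from the residue classes $m_1$ does not hit; for $t=3$ the residue-class count you sketch supplies a fourth; and your $\mathbb{F}_2$-linear difference code for $i=1$ has minimum $\mathbb{F}_4$-symbol weight exactly four (take $\delta_4\neq0$, $\delta_1=\delta_2=\delta_3=0$, using $\beta^2(w)+\beta^2(x)=w+x$ and $\beta(w)+\beta(x)=\beta^3(w)+\beta^3(x)=L(w+x)$). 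So the code in fact has $d=4$ and all-symbol locality $3$; since $d_{\mathrm{opt}}(4i+4,3i+1,3)=4$, it is optimal, matching the section title. The printed $(d,r)=(3,4)$ thus appears to be a transposition of $(4,3)$, and the ``extremal pair at distance $3$'' you planned to exhibit does not exist --- that search is the one place where your outline, read literally, would stall.

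Two small corrections to the details of your sketch. First, for $t\ge3$ your claim that the change vector on $y_{4i+1},\dots,y_{4i+4}$ is automatically nonzero whenever $x_{3i+1}$ is altered is false in general (the $\delta_m$'s from rows $m\le 3i$ can cancel it); it is saved only because when $t=3$ at most two residue classes mod $3$ are hit, leaving one of $y_{4i+1},y_{4i+2},y_{4i+3}$ determined solely by $x_{3i+1}$. Second, the ``$1,\alpha,\alpha^2$ lemma'' analogue you anticipate for the same-block $t=2$ case is not needed here: unlike the $B$-block in $F_i^1(3,3)$, the repeated block in $F_i^1(3,4)$ is again $A$, whose entries are $0$/$1$, so the same-block case is purely combinatorial; only the subcase in which one altered row is row $3i+1$ touches $\beta,\beta^2,\beta^3$.
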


Note that the codes we construct in Theorem \ref{th:nonlinear-optimal-(3,3)} and Theorem \ref{th:nonlinear-optimal-(3,4)} are nonlinear since $\beta$ is a nonlinear operator over $\mathbb{F}_4$.

\section{Future Work}


As future work it is still left to find the exact expression of the largest achievable minimum distance of a linear locally repairable code with all-symbol locality when given the length $n$, dimension $k$, and locality $r$ of the code. Our goal is to also generalize the constructions given in Section \ref{Subsec:Construction-Optimal} to other parameters $r$ and $d$ over small fields.


\end{document}